\newtheorem{thm}{Theorem}[section]
\newtheorem{lem}[thm]{Lemma}
\renewcommand{\leq}{\leqslant}
\renewcommand{\geq}{\geqslant}
\DeclareMathOperator{\cala}{\mathcal{A}}
\begin{document}

\title{{\bf  APX-Hardness of Maximizing Nash Social Welfare with Indivisible Items}}

\author{
Euiwoong Lee\thanks{Supported by a Samsung Fellowship and Venkat Guruswami\rq{}s NSF CCF-1115525. {\tt euiwoonl@cs.cmu.edu} }}

\date{Computer Science Department \\ Carnegie Mellon University \\ Pittsburgh, PA 15213.}

\maketitle

\begin{abstract}
We study the problem of allocating a set of indivisible items to agents with additive utilities to maximize the Nash social welfare. 
Cole and Gkatzelis~\cite{CG15} recently proved that this problem admits a constant factor approximation.
We complement their result by showing that this problem is APX-hard.
\end{abstract}

\section {Introduction}
Suppose there are set $A$ of $n$ agents and set $B$ of $m$ items ($n \leq m$), and each agent $a \in A$ has a non-negative utility $u_{a,i}$ for each item $i \in B$. 
An allocation $\cala$ is defined to be a partition of $B$ into $n$ disjoint subsets $\{ B_a \}_{a \in A}$, 
and agent $i$\rq{}s utility under this allocation is $u_a(\cala) := \sum_{i \in B_a} u_{a, i}$. 
We study the problem of computing an allocation $\cala = \{ B_a \}_{a \in A}$ that maximizes the geometric mean of the agents\rq{} utilities $(\prod_{i \in [n]} u_i(\cala))^{1/n}$. 
The above objective function is known as the Nash social welfare defined in the fifties. 

There are two related objective functions. 
The first objective seeks to maximize the arithmetic mean, which is computationally very easy.
The second objective, also known as the Santa Clause problem~\cite{BCG09}, tries to compute an allocation that maximizes the minimum utility of any agent.
While the first objective focuses on efficiency and the second objective emphasizes fairness, the Nash social welfare balances these two extremes and satisfies other desirable properties. See Cole and Gkatzelis~\cite{CG15} for further discussions on the Nash social welfare. 

Very recently, Cole and Gkatzelis~\cite{CG15} suggested the first constant-factor approximation algorithm, which guarantees $2e^{1/e} \approx 2.889$-approximation. While only NP-hardness of computing the exact optimum is known~\cite{NNRR14}, we prove that there exists a constant $\mu > 1$ such that it is NP-hard to approximate this problem within factor $\mu$, suggesting that constant-factor approximation is the best we can hope for.

\section {APX-Hardness}
We reduce Vertex Cover on 3-regular graphs. Chleb\'{i}k and Chleb\'{i}kov\'{a}~\cite{CC06} proved that 
\begin{thm}~\cite{CC06}
Given a $3$-regular graph with $N$ vertices with $M = 1.5N$ edges, it is NP-hard to distinguish whether it has a vertex cover of size at most $c_{min}N$ for $c_{min} \approx 0.5103$, or every vertex cover has size at least $c_{max}N$ for $c_{max} \approx 0.5155$. Therefore, it is NP-hard to approximate Vertex Cover on 3-regular graphs within factor $\approx 1.01$. 
\end{thm}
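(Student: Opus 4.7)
The plan is to use the gap-preserving gadget technique of Berman and Karpinski, starting from a bounded-occurrence hardness for a constraint satisfaction problem. Concretely, one would begin with H\aa stad's inapproximability of MAX-E$3$-LIN-$2$ (or equivalently a bounded-occurrence MAX-$3$-SAT) restricted so that each variable appears in a constant number of equations; such a sparsification can be obtained via Trevisan's expander-composition argument while preserving a constant gap between the fraction of satisfiable equations in YES and NO instances.

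The core of the reduction is a gadget construction that produces a $3$-regular graph. For each variable $x$, I would introduce an \emph{amplifier}: a small $3$-regular graph with distinguished ``contact'' vertices, one per occurrence of $x$. Amplifiers are designed so that near-minimum vertex covers behave globally like one of two symmetric choices corresponding to $x = 0$ or $x = 1$; their expansion guarantees that any ``mixed'' cover, in which contacts disagree with the global choice, pays a vertex-cover penalty larger than any savings obtainable downstream. For each equation I would attach a small \emph{equation gadget} to the three relevant contact vertices so that the gadget's contribution to a minimum vertex cover is exactly one unit smaller when the equation is satisfied than when it is violated. Finally, degree-correction subgraphs would be added so the entire graph is $3$-regular, giving $M = 1.5 N$ edges automatically.

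The analysis is a two-sided accounting of vertex-cover sizes. On the YES side, starting from a $(1-\varepsilon)$-satisfying assignment, one reads off a vertex cover of size $c_{\min}N + o(N)$, aggregating per-variable amplifier costs, per-equation gadget costs for satisfied equations, and degree-correction overhead. On the NO side, given any vertex cover of size at most $c_{\max}N$, one decodes an assignment by majority-voting the contact states within each amplifier and argues, via expansion, that the decoded assignment would satisfy more than a $\tfrac{1}{2} + \varepsilon$ fraction of the equations, contradicting H\aa stad. The specific constants $c_{\min} \approx 0.5103$ and $c_{\max} \approx 0.5155$ are then the output of a small optimization over amplifier parameters (size, contact count, expansion) and equation-gadget shapes; their ratio $\approx 1.01$ is what appears in the inapproximability statement.

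The main obstacle is amplifier design. One must jointly optimize (i) expansion strong enough to defeat every local cheating pattern at the contact vertices, (ii) amplifier size small enough that the per-variable overhead does not wash out the constraint-based gap, and (iii) strict $3$-regularity even at contacts, which forces careful wiring of the gadgets. Achieving the stated constants requires an explicit family of small $3$-regular amplifiers whose vertex-cover patterns are classified essentially by an exhaustive enumeration rather than by a clean asymptotic expander bound; all numerics flow from that case analysis.
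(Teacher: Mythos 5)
This statement is not proved in the paper at all: it is imported verbatim as a black-box result of Chleb\'{i}k and Chleb\'{i}kov\'{a}~\cite{CC06}, and the paper's contribution begins only after it, with the reduction from cubic Vertex Cover to Nash social welfare. So there is no ``paper proof'' to compare against; the relevant comparison is with the proof in the cited reference. Your outline does track the actual route taken there (and by Berman--Karpinski before them): start from H{\aa}stad's hardness of MAX-E3-LIN-2, sparsify to bounded occurrence, replace variables by amplifier graphs whose near-optimal covers are forced into one of two consistent states, attach small equation gadgets whose minimum-cover cost differs by one unit between satisfied and violated equations, and enforce $3$-regularity throughout. At that level of abstraction your plan is faithful to the known argument.

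However, as a proof of the \emph{stated} theorem it has a genuine gap, and you have identified it yourself: the entire content of the statement is the specific pair of constants $c_{min} \approx 0.5103$ and $c_{max} \approx 0.5155$ (equivalently the $\approx 1.01$ ratio), and nothing in your proposal determines them. The constants are not the output of ``a small optimization'' that can be waved at; they come from an explicit family of $3$-regular amplifiers with a fixed contact-to-size ratio, an exact per-gadget cover accounting on both the YES and NO sides, and a case analysis showing that every cheating pattern at the contacts costs at least as much as it saves. Without exhibiting that amplifier family and carrying out the accounting, your argument would only yield \emph{some} constant-factor hardness for Vertex Cover on $3$-regular graphs, not the quantitative statement used here. (For the purposes of this paper that weaker conclusion would in fact still give APX-hardness of Nash social welfare with a worse constant $\mu$, since the soundness analysis only needs $c_{max} - c_{min}$ to be a positive constant; but it would not prove the theorem as stated, which fixes $\gamma = (c_{max}-c_{min})/3 \approx 0.0017$ and hence the claimed $\mu \approx 1.00008$.) A second, smaller imprecision: after the majority-vote decoding on the NO side you need the decoded assignment to satisfy strictly more than a $\frac{1}{2}+\varepsilon$ fraction of the equations, and the penalty-versus-savings inequality that licenses this must be proved for every local deviation simultaneously, not just per amplifier in isolation; this is exactly the exhaustive classification you defer.
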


Given a $3$-regular graph $G = (V, E)$, our reduction produces the following instance. Let $\alpha$ be a constant in $(\frac{1}{3}, \frac{1}{2})$. Any choice (say 0.4) is sufficient for our purpose, but we denote it as $\alpha$ to keep the presentation clear. 
\begin{itemize}
\item For each vertex $v$, there is an agent $a(v)$. Call them {\em vertex agents}. 
\item For each edge $e$, there is an agent $a(e)$. Call them {\em edge agents}.

\item There are $c_{min}N$ identical items. Each of them has utility $1$ for each vertex agent, and $0$ for each edge agent. Call them {\em vertex items}.
\item For each edge $e$, there is an item $i(e)$. It has utility $1 - \alpha$ for $a(e)$, and $0$ for all other agents. Call them {\em edge items}.
\item For each vertex-edge pair $(v, e)$ with $v \in e$, there is an item $i(v, e)$. 
It has utility $\frac{1}{3}$ for $a(v)$ and $\alpha$ for $a(e)$, and $0$ for all other agents. 
Call them {\em shared items}. 
\end{itemize}

It is easy to verify that $n = N + M = 2.5N$ and $m = c_{min}N + M + 3N = (4.5 + c_{min})N$.

\paragraph{Completeness.}
Suppose $G$ has a vertex cover $C \subseteq V$ of size $c_{min}N$. We allocate items as follows.
\begin{itemize}
\item For each $v \in C$, $a(v)$ gets one vertex item. Her utility is 1.
\item For each $v \notin C$, $a(v)$ gets all three shared items that have nonzero utility. Her utility is 1.
\item Since $C$ is a vertex cover, for each edge $e$, $a(e)$ has either 1 or 2 shared items that have nonzero utility for her and are not taken by any vertex agent. $a(e)$ gets all these items and $i(e)$. 
\begin{itemize}
\item If she gets 1 shared item, her utility is 1. If she gets 2 shared items, her utility is $1 + \alpha$. 
\item Out of $3N$ shared items, $3(1 - c_{min})N$ items are taken by vertex agents and $3c_{min} N$ items are taken by edge agents. Therefore, $3(1 - c_{min}) N$ edge agents have 1 shared item, and $3(c_{min} - 0.5)N$ edge agents have 2 shared items. Let $\beta := 3(c_{min} - 0.5) \approx 0.03$. 
\end{itemize}
\end{itemize}
Therefore, the Nash social welfare of the above allocation is $((1 + \alpha)^{\beta N})^{1/n} = (1 + \alpha)^{\beta / 2.5}$.

\paragraph{Soundness.}
For soundness, we prove that if every vertex cover of $G$ has size at least $c_{max}N$, 
the optimal Nash social welfare of the instance produced by our reduction is low. 
First, we observe that in any optimal allocation, 
each vertex item should be assigned to some vertex agent, and each edge item $i(e)$ should be assigned to the corresponding agent $a(e)$. 
We prove the following lemmas that guarantee the existence of an optimal allocation of a certain form.
\begin{lem}
There exists an optimal allocation where each vertex agent has at most 1 vertex item.
\label{lem:opt1}
\end{lem}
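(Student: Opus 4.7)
My approach is a local exchange argument: starting from any optimal allocation $\cala^*$, if some vertex agent holds two or more vertex items, I transfer one such item to a vertex agent holding none, and show the Nash social welfare does not decrease. The pigeonhole that enables this step is already baked into the construction: there are only $c_{\min} N < N$ vertex items but $N$ vertex agents, so whenever some $a(v)$ holds $k_v \geq 2$ vertex items there must exist some $a(v')$ holding $k_{v'} = 0$.

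The key observation is an a priori bound on the two affected utilities. Write $u_{a(v)}(\cala^*) = k_v + s_v$ and $u_{a(v')}(\cala^*) = s_{v'}$, where $s_v,s_{v'}$ denote the contributions from shared items (vertex agents receive $0$ from edge items). Since $a(v')$ is incident to exactly three shared items of value $1/3$ each, $s_{v'} \leq 1$; and since $k_v \geq 2$, we have $k_v + s_v \geq 2$. After moving one vertex item from $a(v)$ to $a(v')$, the two utilities become $k_v + s_v - 1 \geq 1$ and $s_{v'} + 1$, and the change in the product of all agents' utilities has sign equal to
\[
(k_v + s_v - 1)(s_{v'} + 1) - (k_v + s_v)\, s_{v'} \;=\; (k_v + s_v) - s_{v'} - 1 \;\geq\; 2 - 1 - 1 \;=\; 0.
\]
So the Nash social welfare weakly increases, and no utility is driven to zero (the lower bound $k_v + s_v - 1 \geq 1$ is the reason I needed $k_v \geq 2$ rather than $k_v \geq 1$).

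To run this swap only finitely many times, I would choose, among all optimal allocations, one that additionally minimizes $\sum_{v \in V} k_v^2$. Any swap of the above form strictly decreases this potential because $k_v^2 + 0^2 > (k_v - 1)^2 + 1^2$ whenever $k_v \geq 2$, while (by the inequality above) the allocation remains optimal; this contradicts minimality unless every $k_v \leq 1$, which is the conclusion of the lemma.

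The only subtlety to handle is that the whole argument implicitly assumes the optimum is positive, since otherwise ``optimal allocation'' does not pin anything down. That is easy to dispense with: exhibiting one allocation where every agent gets positive utility (e.g., assign each $a(e)$ her edge item $i(e)$, which alone contributes $1-\alpha > 0$, and assign each $a(v)$ at least one item she values positively) shows the maximum Nash social welfare is strictly positive, so every optimal allocation in fact gives every agent positive utility and the swap is always well-defined. The conceptual step I expect to be the only genuine content is noticing that a vertex agent without any vertex item has utility at most $1$, which is precisely the slack that makes the exchange harmless.
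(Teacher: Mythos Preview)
Your proof is correct and follows essentially the same local exchange argument as the paper: find $a(v)$ with at least two vertex items and $a(v')$ with none (possible since $c_{\min}N<N$), bound $u_{a(v)}\ge 2$ and $u_{a(v')}\le 1$, and check that moving one vertex item over does not decrease the product. Your use of the potential $\sum_v k_v^2$ to guarantee termination, and your remark that the optimum is strictly positive, are tidier than the paper's informal ``repeatedly apply this observation,'' but the substance is identical.
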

\begin{proof}
Take an arbitrary optimal allocation $\cala$. If no vertex agent has more than 1 vertex item, the lemma is proved. 
Let $a(v)$ be an agent that has $x$ vertex items for $x \geq 2$. 
Note that $u_{a(v)}(\cala) \geq x \geq 2$.
Since the number of vertex items $c_{min}N$ is less than the number of vertex agents $N$, there is another vertex agent $a(v\rq{})$ that has no vertex item. The utility of $a(v\rq{})$ only comes from shared items, so her utility $u_{a(v\rq{})}(\cala)$ is at most 1.
Consider another allocation $\cala\rq{}$ here $a(v)$ gives one of her vertex item to $a(v\rq{})$. 
Utilities of agents except $a(v)$ and $a(v\rq{})$ do not change, and the product of the utilities of $a(v)$ and $a(v\rq{})$ becomes
\[
u_{a(v)}(\cala\rq{}) 
u_{a(v\rq{})}(\cala\rq{}) = 
(u_{a(v)}(\cala) - 1)
(u_{a(v\rq{})}(\cala) + 1) 
\geq 
u_{a(v)}(\cala) u_{a(v\rq{})}(\cala),
\]
where equality holds only if $u_{a(v)}(\cala) = 2, u_{a(v\rq{})}(\cala) = 1$. Therefore, this new allocation never decreases the Nash social welfare. 
The lemma is proved by repeated applying this observation. 
\end{proof}
\begin{lem}
Let $\alpha \in (\frac{1}{3}, \frac{1}{2})$. 
There exists an optimal allocation where each vertex agent has at most 1 vertex item, and 
each shared item $i(v, e)$ satisfies the following rule. 
\begin{enumerate}
\item If $a(v)$ has 1 vertex item, $a(e)$ gets $i(v, e)$.
\item Elseif $a(e)$ has 1 shared item other than $i(v, e)$, $a(v)$ gets $i(v, e)$. 
\item Elseif $a(v)$ has 2 shared items other than $i(v, e)$, $a(e)$ gets $i(v, e)$. 
\item Else $a(v)$ gets $i(v, e)$.
\end{enumerate}
\label{lem:opt2}
\end{lem}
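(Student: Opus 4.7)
The plan is to take any optimal allocation $\cala$ guaranteed by Lemma~\ref{lem:opt1} and show that if any of the four rules fails for some shared item $i(v,e)$, then swapping $i(v,e)$ between $a(v)$ and $a(e)$ strictly increases the product $u_{a(v)}\cdot u_{a(e)}$, and hence the Nash social welfare, contradicting optimality of $\cala$. Since every such swap touches only a shared item, it preserves both the conclusion of Lemma~\ref{lem:opt1} (no vertex item changes hands) and the observation (noted just before the lemma) that each edge item $i(e)$ remains with $a(e)$. Moreover, a shared item $i(v,e)$ has nonzero utility only for $a(v)$ and $a(e)$, so the effect on the Nash social welfare is captured entirely by the change in that product, and agents other than $a(v)$ and $a(e)$ can be ignored.

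For each rule, the violation hypothesis together with the preconditions of the earlier rules pins down both utilities before the swap. For Rule~1, $a(v)$ holds a vertex item and $i(v,e)$, so $a(e)$ can hold at most $i(e)$ and the other shared item $i(v',e)$ of $e$, giving $u_{a(e)} \le 1$. For Rule~2, $a(e)$ holds $i(v,e)$, $i(v',e)$, and $i(e)$, so $u_{a(e)} = 1+\alpha$ and $u_{a(v)} \le 2/3$ (her only possible items are her two other shared items). For Rule~3, $a(v)$ holds all three of her shared items while $a(e)$ holds only $i(e)$, so $(u_{a(v)}, u_{a(e)}) = (1, 1-\alpha)$. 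For Rule~4, $a(e)$ holds $i(v,e)$ and $i(e)$ while $u_{a(v)} \in \{0, 1/3\}$. In each case the post-swap utilities differ from the pre-swap ones only by a $1/3$ or $\alpha$ transfer, and the resulting comparison of the two products reduces to a single inequality: $\alpha > 1/3$ for Rules~1 and 3, and $\alpha < 1/2$ for Rules~2 and 4.

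The main thing to be careful about is that each of these four comparisons is a strict improvement rather than merely a weak one; this is exactly where the open interval $\alpha \in (1/3, 1/2)$ is used. Once strictness is established in every case, a single violating swap directly contradicts the optimality of $\cala$, so every one of the four rules must hold in $\cala$ and no iterative swap argument or potential-function bookkeeping is needed.
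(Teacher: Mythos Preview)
Your proposal is correct and essentially identical to the paper's proof: start from an optimal allocation given by Lemma~\ref{lem:opt1}, and for each of the four rules show that a violation allows a swap of $i(v,e)$ between $a(v)$ and $a(e)$ that strictly increases $u_{a(v)}\cdot u_{a(e)}$, contradicting optimality. The utility bounds you derive in each case, and the reduction of the four product comparisons to the inequalities $\alpha>1/3$ (Rules~1 and~3) and $\alpha<1/2$ (Rules~2 and~4), match the paper's computations exactly.
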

\begin{proof}
By Lemma~\ref{lem:opt1}, 
fix an optimal allocation $\cala$ where each vertex agent has at most 1 vertex item. 
Fix a shared item $i(v, e)$. 
For each step of the above rule, we assume that it is violated
and derive contradiction to the optimality of the current allocation 
by considering another allocation.

\begin{enumerate}
\item Assume towards contradiction that $a(v)$ has 1 vertex item but $a(v)$ also gets $i(v, e)$. 
It implies that $u_{a(v)}(\cala) \geq \frac{4}{3}$ and $u_{a(e)}(\cala) \leq 1$. 
Consider another allocation $\cala\rq{}$ where $a(v)$ gives $i(v, e)$ to $a(e)$. Utilities of other agents do not change and 
\[
u_{a(v)}(\cala\rq{}) 
u_{a(e)}(\cala\rq{}) 
= 
(u_{a(v)}(\cala) - 1/3)
(u_{a(e)}(\cala) + \alpha)
>
u_{a(v)}(\cala)
u_{a(e)}(\cala),
\]
since 
\[
\frac{(u_{a(v)}(\cala) - 1/3)}{u_{a(v)}(\cala)} \geq \frac{3}{4}, \qquad 
\frac{(u_{a(e)}(\cala) + \alpha)}{u_{a(e)}(\cala)} \geq 1 + \alpha, \qquad
\frac{3}{4} (1 + \alpha) > 1.
\]
This contradicts the optimality of $\cala$. 

\item Assume towards contradiction that $a(v)$ has no vertex item, $a(e)$ has 1 shared item other than $i(v, e)$, but $a(e)$ also gets $i(v, e)$. 
It implies that $u_{a(v)}(\cala) \leq \frac{2}{3}$ and $u_{a(e)}(\cala) = 1 + \alpha$. 
Consider another allocation $\cala\rq{}$ where $a(e)$ gives $i(v, e)$ to $a(v)$. Utilities of other agents do not change and 
\[
u_{a(v)}(\cala\rq{}) 
u_{a(e)}(\cala\rq{}) 
= 
(u_{a(v)}(\cala) + 1/3)
(u_{a(e)}(\cala) - \alpha)
>
u_{a(v)}(\cala)
u_{a(e)}(\cala),
\]
since 
\[
\frac{(u_{a(v)}(\cala) + 1/3)}{u_{a(v)}(\cala)} \geq \frac{3}{2}, \qquad 
\frac{(u_{a(e)}(\cala) - \alpha)}{u_{a(e)}(\cala)} = \frac{1}{1 + \alpha}, \qquad
\frac{3}{2} \cdot \frac{1}{1 + \alpha} > 1. 
\]
This contradicts the optimality of $\cala$. 

\item Assume towards contradiction that $a(v)$ has no vertex item but has all 3 shared items including $i(v, e)$, $a(e)$ has no shared item. 
It implies that $u_{a(v)}(\cala) = 1$ and $u_{a(e)}(\cala) = 1 - \alpha$. 
Consider another allocation $\cala\rq{}$ where $a(v)$ gives $i(v, e)$ to $a(e)$. Utilities of other agents do not change and 
\[
u_{a(v)}(\cala\rq{}) 
u_{a(e)}(\cala\rq{}) 
= 
(u_{a(v)}(\cala) - 1/3)
(u_{a(e)}(\cala) + \alpha)
>
u_{a(v)}(\cala)
u_{a(e)}(\cala),
\]
since 
\[
\frac{(u_{a(v)}(\cala) - 1/3)}{u_{a(v)}(\cala)} = \frac{2}{3}, \qquad 
\frac{(u_{a(e)}(\cala) + \alpha)}{u_{a(e)}(\cala)} = \frac{1}{1 - \alpha}, \qquad
\frac{2}{3} \cdot \frac{1}{1 - \alpha} > 1. 
\]
This contradicts the optimality of $\cala$. 

\item Assume towards contradiction that $a(v)$ has no vertex item and at most 1 shared item other than $i(v, e)$, 
but $a(e)$ gets $i(v, e)$. 
It implies that $u_{a(v)}(\cala) \leq \frac{1}{3}$ and $u_{a(e)}(\cala) \geq 1$. 
Consider another allocation $\cala\rq{}$ where $a(e)$ gives $i(v, e)$ to $a(v)$. Utilities of other agents do not change and 
\[
u_{a(v)}(\cala\rq{}) 
u_{a(e)}(\cala\rq{}) 
= 
(u_{a(v)}(\cala) + 1/3)
(u_{a(e)}(\cala) - \alpha)
>
u_{a(v)}(\cala)
u_{a(e)}(\cala),
\]
since 
\[
\frac{(u_{a(v)}(\cala) + 1/3)}{u_{a(v)}(\cala)} \geq 2, \qquad 
\frac{(u_{a(e)}(\cala) - \alpha)}{u_{a(e)}(\cala)} \geq 1 - \alpha, \qquad
2 (1 - \alpha) > 1. 
\]
This contradicts the optimality of $\cala$. 
\end{enumerate}
\end{proof}

Now we prove the main lemma for soundness.
Note that for any $\gamma > 0$, $(\frac{2(1 + \alpha)}{3})^{\gamma / 2.5} < 1$. 
\begin{lem}
For some universal constant $\gamma > 0$, 
if every vertex cover of $G$ has at least $c_{max} N$ vertices, 
the optimal Nash social welfare is at most $(\frac{2(1 + \alpha)}{3})^{\gamma / 2.5} (1 + \alpha)^{\beta / 2.5}$.
\end{lem}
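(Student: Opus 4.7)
The plan is to invoke Lemma~\ref{lem:opt2} to fix a well-structured optimal allocation and then bound its Nash social welfare by a structural analysis of $G$'s edges.

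I would begin by introducing the set $C'\subseteq V$ of vertex agents who receive a vertex item in this allocation (vertex items are never given to edge agents, and each vertex agent receives at most one by Lemma~\ref{lem:opt1}), so $|C'|=c_{min}N$. Partition the edges by the number of endpoints in $C'$ into $X_1, X_2, X_3$. Since $G$ is $3$-regular, $X_1+X_2+X_3=1.5N$ and $2X_1+X_2=3c_{min}N$, yielding $X_1=\beta N+X_3$. Adding one endpoint of each $X_3$-edge to $C'$ produces a vertex cover of size at most $c_{min}N+X_3$, so the soundness hypothesis forces $X_3\geq(c_{max}-c_{min})N=:\delta N$.

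Next, I would compute per-agent utilities under the rules of Lemma~\ref{lem:opt2}. Each vertex agent in $C'$ has utility exactly $1$; each $e\in X_1$ gives $u_{a(e)}=1+\alpha$; each $e\in X_2$ gives $u_{a(e)}=1$ (Rule~1 hands one shared item to $a(e)$, Rule~2 hands the other to the non-$C'$ endpoint). For $e=(u,v)\in X_3$, let $t_e$ be the number of shared items among $i(u,e), i(v,e)$ kept by the vertex agents. A short check shows $t_e\in\{1,2\}$: were both shared items in $a(e)$'s bundle, applying the rule to $i(v,e)$ would find $a(e)$ already holding $i(u,e)$, so Rule~2 would fire and force $a(v)$ to take $i(v,e)$. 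Hence $u_{a(e)}\in\{1,1-\alpha\}$, and I split $X_3=X_3^1+X_3^2$ according to $t_e$. For $v\notin C'$, letting $d_3(v)$ be her $X_3$-degree and $s(v)$ the number of $X_3$-shared items she keeps, her utility equals $1-\ell(v)/3$ with $\ell(v):=d_3(v)-s(v)$ (she collects a shared item from every incident $X_2$-edge via Rule~2); the case $\ell(v)=3$ makes the welfare zero, so we may assume $\ell(v)\in\{0,1,2\}$.

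The crux is the double-counting identity $\sum_{v\notin C'}\ell(v) = 2X_3-(X_3^1+2X_3^2) = X_3^1$, which collapses the cross-edge interactions. Combined with the pointwise bound $1-\ell/3\leq(2/3)^\ell$ for $\ell\in\{0,1,2\}$, it yields $\prod_{v\notin C'}u_{a(v)}\leq(2/3)^{X_3^1}$. Multiplying all agents' utilities,
\[
\prod_{a} u_a(\cala) \;\leq\; (1+\alpha)^{X_1}(2/3)^{X_3^1}(1-\alpha)^{X_3^2} \;=\; (1+\alpha)^{\beta N}\left[\tfrac{2(1+\alpha)}{3}\right]^{X_3^1}(1-\alpha^2)^{X_3^2}.
\]
The elementary inequality $1-\alpha^2\leq 2(1+\alpha)/3$ (which rearranges to $\alpha\geq 1/3$) lets me absorb the $(1-\alpha^2)^{X_3^2}$ factor into $[2(1+\alpha)/3]^{X_3^2}$, producing the upper bound $(1+\alpha)^{\beta N}[2(1+\alpha)/3]^{X_3}\leq(1+\alpha)^{\beta N}[2(1+\alpha)/3]^{\delta N}$. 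Taking $n$-th roots with $n=2.5N$ then delivers the stated bound with $\gamma=\delta$. The main obstacle will be the apparent circularity in $u_{a(v)}$ through the allocation of $X_3$-shared items at adjacent edges; the identity $\sum\ell(v)=X_3^1$ is what breaks it and reduces the analysis to a clean pointwise calculation.
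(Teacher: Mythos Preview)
Your argument is correct and follows the same structural skeleton as the paper's proof: fix a rule-compliant optimal allocation via Lemma~\ref{lem:opt2}, let $C'$ be the vertices receiving a vertex item, and exploit the soundness hypothesis to lower-bound the number of edges uncovered by $C'$. Your edge partition $X_1,X_2,X_3$ (by number of endpoints in $C'$) is in bijection with the paper's $E_2,E_{1C},E_{1I}\cup E_0$, and your double-counting identity $\sum_{v\notin C'}\ell(v)=X_3^1$ is exactly the paper's observation that $|I_2|=|E_{1I}|$ (in fact Lemma~\ref{lem:opt2} forces $\ell(v)\in\{0,1\}$, so your pointwise bound $1-\ell/3\le(2/3)^\ell$ is an equality and you never actually need the $\ell=2$ case).

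The one genuine difference is the endgame estimate. The paper writes the product as $(2(1+\alpha)/3)^{|I_2|}(1+\alpha)^{\beta N}(1-\alpha^2)^{|E_0|}$, discards the $(1-\alpha^2)^{|E_0|}$ factor, and then uses only $|I_2|\ge \tfrac{1}{3}(c_{\max}-c_{\min})N$; this yields $\gamma=(c_{\max}-c_{\min})/3$. You instead observe that $1-\alpha^2\le 2(1+\alpha)/3$ whenever $\alpha\ge 1/3$, absorb the two factors into $\bigl(2(1+\alpha)/3\bigr)^{X_3^1+X_3^2}$, and invoke the cleaner bound $X_3^1+X_3^2=X_3\ge(c_{\max}-c_{\min})N$. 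This gives $\gamma=c_{\max}-c_{\min}$, a factor-of-three improvement over the paper's constant (and correspondingly cubes the inapproximability ratio $\mu$). So your route is not just valid but quantitatively sharper.
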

\begin{proof}
Let $\cala$ be an optimal allocation that satisfies the conditions in Lemma~\ref{lem:opt2}.
Let $C \subseteq V$ be a set of vertices $v$ such that $a(v)$ gets 1 vertex item ($|C| = c_{min} N$), 
and $I := V \setminus C$. 
If the subgraph induced by $I$ has less than $(c_{max} - c_{min})N$ edges, 
there exists a vertex cover of size less than $c_{max}N$ vertices, since adding one arbitrary endpoint of each such edge to $C$ yields a vertex cover. 
This contradicts that every vertex cover of $G$ has size at least $c_{max}N$, so the subgraph induced by $I$ has at least 
$(c_{max} - c_{min})N$ edges.

By the first rule of Lemma~\ref{lem:opt2}, if some vertex agent $a(v)$ gets 1 vertex item, it does not get any shared item. 
By the fourth rule, if $a(v)$ does not get any vertex item, it has to get at least 2 shared items. 
Therefore, for each vertex $v \in V$, the possibility utility $u_{a(v)}(\cala) \in \{ \frac{2}{3}, 1 \}$. 
Let $I_3 \subseteq I$ be a set of vertices $v$ such that the corresponding agent $a(v)$ has no vertex item but all 3 shared items. 
Let $I_2 := I \setminus I_3$ be a set of vertices $v$ such that the corresponding agent $a(v)$ has no vertex item and 2 shared items. 
It is clear that $u_{a(v)}(\cala) = 1$ if $v \in C \cup I_3$ and $u_{a(v)}(\cala) = \frac{2}{3}$ if $v \in I_2$. 
For $i = 0, 1, 2$, let $E_i$ be the set of edges $e$ such that the corresponding agent $a(e)$ gets $i$ shared items. 
We deduce the following facts.

\begin{enumerate}
\item By definition, each edge $e$ of $E_2$ cannot have any endpoint in $I_3$. 
By the second rule of Lemma~\ref{lem:opt2}, it cannot have any endpoint in $I_2$. 
Therefore, it has to have both endpoints in $C$.
\item By definition, each edge $e$ of $E_1$ cannot be entirely contained in $I_3$.
By the first rule of Lemma~\ref{lem:opt2}, it cannot be entirely contained in $C$. 
Let $E_{1C} \subseteq E_{1}$ be the set of edges covered by $C$, and 
$E_{1I} := E_1 \setminus E_{1C}$. 
\item By the third rule of Lemma~\ref{lem:opt2}, all edges in $E_0$ have both endpoints in $I_2$. 
\end{enumerate}
Therefore, the number of edges induced by $I$ is $|E_{1I}| + |E_0| \geq (c_{max} - c_{min})N$. 
Since edges in $E_0$ have both endpoints in $I_2$ and 
edges in $E_{1I}$ have at least one endpoint in $I_2$, 3-regularity of $G$ implies 
$3|I_2| \geq |E_{1I}| + 2|E_0|$, so we can conclude that $|I_2| \geq \gamma  N$,
where $\gamma := \frac{(c_{max} - c_{min})}{3} \approx 0.0017$.

Furthermore, by the number of shared items, the number of vertices, the number of edges, we have the following three identities.
\begin{equation}
\label{eq1}
3 |I_3| + 2 |I_2| + 2|E_2| + |E_1| = 3N.
\end{equation}
\begin{equation}
\label{eq2}
|I_3| + |I_2| = (1 - c_{min}) N. 
\end{equation}
\begin{equation}
\label{eq3}
|E_2| + |E_1| + |E_0| = M = 1.5N. 
\end{equation}
There are five variables and three identities. Eliminating $|I_3|$ and $|E_1|$ gives 
\begin{align*}
\eqref{eq1} - 3 \eqref{eq2} - \eqref{eq3} :& -|I_2| + |E_2| - |E_0| = (1.5 - 3(1 - c_{min}))N \\
\Rightarrow & |E_2| = \beta N + |I_2| + |E_0|,
\end{align*}
where $\beta$ was defined to be $3(c_{min} - 0.5) \approx 0.03$. 

The Nash social welfare of $\cala$ is 
\begin{align*}
& \bigg( \big(\frac{2}{3}\big)^{|I_2|} (1 + \alpha)^{|E_2|} (1 - \alpha)^{|E_0|} \bigg)^{1/n} \\
=& \bigg( \big(\frac{2}{3}\big)^{|I_2|} (1 + \alpha)^{|I_2| + |E_0| + \beta N} (1 - \alpha)^{|E_0|} \bigg)^{1/n} \\
=& \bigg( \big(\frac{2(1 + \alpha)}{3}\big)^{|I_2|} (1 + \alpha)^{\beta N} \big((1 + \alpha)(1 - \alpha)\big)^{|E_0|} \bigg)^{1/n} \\
\leq& \bigg( \big(\frac{2(1 + \alpha)}{3}\big)^{\gamma N} (1 + \alpha)^{\beta N}  \bigg)^{1/n} \\
=& \big(\frac{2(1 + \alpha)}{3}\big)^{\gamma / 2.5} (1 + \alpha)^{\beta / 2.5} .
\end{align*}
\end{proof}
By our completeness and soundness properties,  if $G$ has a vertex cover of size at most $c_{min}N$, the reduced instance has the Nash social welfare at least $(1 + \alpha)^{\beta/2.5}$, while if every vertex cover of $G$ has size at least $c_{max}N$, the reduced instance has the Nash social welfare at most $(\frac{2(1 + \alpha)}{3})^{\gamma/2.5} (1 + \alpha)^{\beta/2.5}$.
Therefore, it is NP-hard to approximate the Nash social welfare within factor 
\[
\mu := (\frac{2(1 + \alpha)}{3})^{-\gamma/2.5} > 1.
\]
With $\alpha = \frac{1}{3}$ and $\gamma \approx 0.0017$, $\mu \approx 1.00008$. 
It still remains an open problem to close the gap between $1.00008$ and $2.889$. 

\bibliographystyle{abbrv}
\bibliography{../mybib}

\end{document}